\newtheorem{theorem}{Theorem}
\DeclareMathOperator*{\argmin}{arg\,min}
\DeclareMathOperator*{\argmax}{arg\,max}
\journal{Computers \& Operations Research}
\begin{document}

\begin{frontmatter}

\title{Heuristics for $k$-domination models of facility location problems in street networks}

\author{Padraig Corcoran\corref{mycorrespondingauthor}}
\address{School of Computer Science \& Informatics \\
Cardiff University \\
Wales, UK.}

\cortext[mycorrespondingauthor]{Corresponding author}
\ead{corcoranp@cardiff.ac.uk}

\author{Andrei Gagarin}
\address{School of Mathematics \\
Cardiff University \\
Wales, UK.}

\begin{abstract}
\noindent We present new greedy and beam search heuristic methods to find small-size $k$-dominating sets in graphs. The methods are inspired by a new problem formulation which explicitly highlights a certain  structure of the problem. An empirical evaluation of the new methods is done with respect to two existing methods, using instances of graphs corresponding to street networks. The $k$-domination problem with respect to this class of graphs can be used to model real-world facility location problem scenarios. For the classic minimum dominating set ($1$-domination) problem, all except one methods perform similarly, which is due to their equivalence in this particular case. However, for the $k$-domination problem with $k>1$, the new methods outperform the benchmark methods, and the performance gain is more significant for larger values of $k$.
\end{abstract}

\begin{keyword}
$k$-domination \sep heuristic methods \sep facility location.
\end{keyword}

\end{frontmatter}

\section{Introduction}
A graph is a mathematical (combinatorial) abstraction that is commonly used to represent many real-world problems. 
A \emph{simple graph} consists of a set of objects called \emph{vertices} and a set of pairwise relations between the objects called \emph{edges}.
For example, a visual scene can be modelled as a graph \cite{xu2017scene}. Similarly, a street network can be modelled as a graph where locations are modelled as vertices and street segments connecting locations are modelled as edges \cite{corcoran2015inferring}. 
Many optimization problems are formulated on graphs. 
These include the shortest path problem, which concerns computing a minimum length path between two vertices in a graph, and the vertex cover problem, which concerns computing a smallest subset of vertices that includes at least one endpoint of every edge. Many real-world problems in turn can be modelled as instances of these graph-theoretic problems. For example, the problem of finding a shortest path in a street network can be modelled as the problem of computing the shortest path in a graph which models that street network.

We consider the minimum $k$-dominating set ($k$-domination) problem in graphs, which is one of the multiple domination problem types (e.g., see \cite{gagarin2013randomized,klasing2004hardness}).
Given a simple graph and a positive integer $k$, the minimum $k$-dominating set ($k$-domination) problem consists in finding a  smallest possible (by cardinality) subset of graph vertices such that each vertex is an element of this subset or is adjacent to at least $k$ elements of this subset. 
Examples and general description for this kind of modelling can be found in Prolegomenon and Chapter $1$ of the classic book on domination in graphs \cite{Fundamentals1998}.
Thai et al. \cite{thai2007approximation} modelled the problem of computing a virtual backbone in a wireless ad-hoc or sensor network as a $k$-domination problem. Gagarin et al. \cite{gagarin2018multiple} modelled the problem of optimizing the placement of electrical vehicle charging stations as a $k$-domination problem. Also, Khomami et al. \cite{khomami2018minimum} modelled the problem of maximizing influence in a social network as a $k$-domination problem.

The $k$-domination problem has been proven to be $\mathcal{NP}$-hard \cite{Jacob1989}, even, e.g., in split graphs \cite{Chang2013}. As a consequence, unless the problem instance is reasonably small, one generally cannot use an exact method to compute an optimal solution in reasonable time (e.g., see the state-of-the-art deterministic algorithms and computational results in \cite{Wendy1,Wendy2}). Therefore, heuristic methods are normally used 
to find small-size $k$-dominating sets in reasonable time, assuming the solution may be suboptimal. 

In this article we propose a novel formulation of the $k$-domination problem. This new formulation makes explicit important structure in the problem, which is not present in existing formulations. We subsequently propose two heuristic methods for solving this problem, which exploit this structure. The methods in question use greedy and beam search approach ideas. We empirically evaluate these two methods with respect to street network reachability graphs. The $k$-domination problem with respect to this class of graphs can be used to model facility location problems in street networks \cite{gagarin2018multiple}.

The remainder of this paper is structured as follows. In Section \ref{sec:background} we review existing solutions to the $k$-domination problem. In Section \ref{sec:method} we formally define the $k$-domination problem and the proposed novel problem formulation. In this section we also describe the proposed heuristic methods for solving this problem. In Section \ref{sec:results} we present an experimental evaluation of the proposed methods with respect to existing baseline methods on street network reachability graphs. Finally, in Section \ref{sec:conclusions} we draw some conclusions from this work and discuss some possible directions for future research.

\section{Related Works}
\label{sec:background}
In this section we review existing methods for computing solutions to the $k$-domination problem. We focus exclusively on the case where the graphs in question are unweighted and undirected. The methods described in this section do not naturally generalise to other types of graphs and instead specialized methods must be considered, e.g., see \cite{wang2018fast}. A number of authors have proposed methods for computing solutions to different variants of the $k$-domination problem. For example, Klasing and Laforest \cite{klasing2004hardness} considered the $k$-tuple domination problem which is a more constrained variation of the $k$-domination problem. Shang et al. \cite{shang2007algorithms} proposed a method for computing a $k$-tuple dominating set which is also $m$-connected.

The $k$-domination problem is a classic optimization problem. As a consequence, a large number of methods for solving this problem have been proposed. These methods can broadly be distinguished with respect to the following five features. The first feature concerns whether the method in question is designed for the classic minimum dominating set problem, i.e. $1$-domination problem in our more general context ($k=1$). The second feature concerns whether the method in question automatically generalizes to the cases where $k>1$. The final three features concern whether the method in question uses a greedy search heuristic, a metaheuristic or an exact method to determine a solution. Both greedy search heuristic and metaheuristic methods attempt to compute a useful solution in a reasonable amount of time, where this solution may be not optimal. On the other hand, exact methods attempt to compute an optimal solution. Table \ref{table:existing_methods} presents a summary of existing methods for the $k$-domination problem with respect to these five features.

The distinction with respect to whether a method is applicable only to the case $k=1$ or automatically generalizes to the case  $k>1$ is particularly important in the context of this work. Therefore, in Sections \ref{sec:background_k_1} and \ref{sec:background_k_n} we respectively review methods belonging to these two categories.

\begin{table}
\centering
\begin{tabular}{ c|c|c|c|c|c } 
 \hline
  & $k=1$ & $k \geq 1$ & Greedy & Metaheuristic & Exact \\
  & & & Search & & Method \\
 \hline
 Parekh \cite{parekh1991analysis} & \checkmark & & \checkmark & & \\
 Sanchis \cite{sanchis2002experimental} & \checkmark & & \checkmark & & \\
 Eubank et al. \cite{eubank2004structural} & \checkmark & & \checkmark & & \\
 Chellali et al. \cite{chellali2012k} & \checkmark & \checkmark & \checkmark & & \\
 Hedar et al. \cite{hedar2010hybrid} & \checkmark & & & \checkmark & \\
 Hedar et al. \cite{hedar2012simulated} & \checkmark & & & \checkmark & \\
 Ho et al. \cite{ho2006enhanced} & \checkmark & & & \checkmark & \\
 Nehez et al. \cite{nehez2015comparison} & \checkmark & & & & \checkmark \\
 Bird \cite{Wendy1} & \checkmark & & & & \checkmark \\
 Assadian \cite{Wendy2} & \checkmark & & & & \checkmark \\
 Couture et al. \cite{couture2008incremental} & & \checkmark & \checkmark & & \\
 Gagarin et al. \cite{gagarin2013randomized} & & \checkmark & \checkmark & & \\
 Gagarin et al. \cite{gagarin2018multiple} & & \checkmark & \checkmark & & \checkmark \\
 \hline
\end{tabular}
\caption{Methods for computing solutions to the $k$-domination problem are distinguished with respect to five features.}
\label{table:existing_methods}
\end{table}

\subsection{Searching for minimum dominating sets ($k=1$)}
\label{sec:background_k_1}
Existing solution methods for the classic minimum dominating set problem, i.e. the $k$-domination problem where $k=1$, can be broadly divided into greedy search heuristic, metaheuristic, and exact solution (deterministic) methods. We now review methods belonging to each of these categories in turn.

\subsubsection*{Greedy Search Heuristic Methods}
The first and most commonly used standard greedy search heuristic for computing dominating sets ($k=1$) is described in Parekh \cite{parekh1991analysis}. The method initializes a set $D$ to be the empty set and iteratively adds vertices to $D$ until it forms a dominating set. The vertex added to $D$ at each iteration is determined by selecting a vertex from the set of vertices whose neighbourhood contains a maximum number of vertices currently not dominated. In this context, a vertex is not dominated if it is not an element of the set $D$ and not adjacent to any vertex in $D$.

Sanchis \cite{sanchis2002experimental} evaluated four greedy search heuristic methods for computing small-size dominating sets. The first method is entitled \textit{Greedy}. This method initializes a set $D$ to be the empty set and iteratively adds vertices to $D$ until it forms a dominating set. The vertex added to $D$ at each step is determined by selecting uniformly at random a vertex from the set of vertices whose neighbourhood contains a maximum number of vertices currently not dominated. This method is similar to that described by Parekh \cite{parekh1991analysis} but with the addition of randomization in vertex selection. The second method is entitled \textit{Greedy\_Rev}. This method initializes a set $D$ to equal the set of graph vertices in question and iteratively removes vertices from $D$ until no further vertex can be removed while still maintaining the property that $D$ is a dominating set. The vertex removed from $D$ at each step is determined by selecting uniformly at random a vertex from the set of vertices which are eligible to be removed and have the maximum degree. The third method is entitled \textit{Greedy\_Ran}. This method is similar to that entitled \textit{Greedy} with the exception that the vertex added to $D$ at each step is determined by selecting a vertex with probability proportional to the number of adjacent vertices currently not dominated. The final method is entitled \textit{Greedy\_Vote}. This method initializes a set $D$ to be the empty set and iteratively adds vertices to $D$ until it forms a dominating set. The vertex added to $D$ at each step is determined by selecting a vertex with probability proportional to the number of neighbours of its neighbours currently not dominated. The author evaluated the four above methods on two different classes of graphs and found the \textit{Greedy} and \textit{Greedy\_Vote} methods to perform best.

Eubank et al. \cite{eubank2004structural} evaluated five greedy search heuristic methods for finding dominating sets. The first method is called \textit{RegularGreedy} and is the same as the standard \cite{parekh1991analysis}. The second method is named \textit{FastGreedy}. This method initializes a set $D$ to the empty set. It then iterates over the vertices in the graph considering vertices of greater degree first and adding each vertex to the set $D$ until it forms a dominating set. The third method is entitled \textit{VRegularGreedy}. This method initializes a set $D$ to be the set of all neighbours of vertices of degree $1$. It subsequently applies the standard greedy approach \cite{parekh1991analysis} to the graph induced by vertices currently not in $D$. The fourth and fifth methods are called \textit{FastGreedy-1} and \textit{FastGreedy-2}. Both methods are slight variations of the \textit{FastGreedy} method described above. The authors evaluated the above five methods on a number of real-world social networks and random graphs. They found that the methods \textit{RegularGreedy} and \textit{VRegularGreedy} performed best.

Chellali et al. \cite{chellali2012k} proposed a greedy search heuristic method which initializes a set $D$ to be the empty set and iteratively adds vertices to $D$ until it forms a dominating set. The vertex added to $D$ at each step is determined by selecting uniformly at random a vertex from the set of vertices currently not dominated. This method is implemented in the NetworkX software library which is a highly popular Python software library for graph analysis \cite{hagberg2008exploring}.

It is important to note that many of the greedy search heuristic methods reviewed above are also randomized methods. This combined with the general low computational complexity of these methods means that they can be applied to a given problem instance a large number of times with the best solution obtained being returned.

\subsubsection*{Metaheuristic Methods}
Hedar and Ismail \cite{hedar2010hybrid} proposed a number of genetic algorithms for finding dominating sets and evaluated them on a set of random graphs. The same authors later proposed a simulated annealing method to search for dominating sets \cite{hedar2012simulated}. They experimentally tested this method with respect to a stochastic local search method, the genetic algorithm of  \cite{hedar2010hybrid}, and the method entitled \textit{Greedy} proposed by Sanchis \cite{sanchis2002experimental} on a set of random graphs. Their experiments show the simulated annealing and the genetic algorithm of \cite{hedar2010hybrid} to perform best.

Ho et al. \cite{ho2006enhanced} proposed a number of ant colony optimization methods for computing dominating sets. The authors evaluated these methods and a genetic algorithm on a set of random graphs. They found that an ant colony optimization method outperforms the genetic algorithm.

\subsubsection*{Exact Methods}
Nehez et al. \cite{nehez2015comparison} proposed an integer linear programming (ILP) method for computing dominating sets. The authors evaluated this method against a randomized local search method and the standard greedy search heuristic \cite{parekh1991analysis} on a number of real-world graphs. The authors found that the ILP approach performed best but did not scale to large graphs. The same was shown by computational experiments in Gagarin and Corcoran \cite{gagarin2018multiple}, where an ILP formulation is described for a more general $k$-domination problem scenario.

The state-of-the-art deterministic search methods for dominating sets in graphs have been recently developed and described by Bird \cite{Wendy1} and Assadian \cite{Wendy2}. The methods are based on backtracking, and the experimental results indicate that they are not likely to be practical for graphs with more than several hundred vertices. 

\subsection{Searching for small $k$-dominating sets ($k \geq 1$)}
\label{sec:background_k_n}
The more general $k$-domination problem, where $k \geq 1$, is less well studied than the classic minimum dominating set problem ($k=1$). In fact, only a few solution methods described in the previous section generalize to solve the $k$-domination problem for any $k\ge 1$. 
These solution methods can be broadly divided into greedy search heuristics and exact (deterministic) algorithms.

\subsubsection*{Greedy Search Heuristic Methods}
A generalization of the standard greedy algorithm (\cite{parekh1991analysis}) for computing $k$-dominating sets is described in \cite{gagarin2018multiple}. Specifically, this method initializes a set $D$ to be the empty set and iteratively adds vertices to $D$ until it forms a $k$-dominating set. The vertex added to $D$ at each step is determined by selecting uniformly at random a vertex from the set of vertices whose neighbourhood contains a maximum number of vertices currently not dominated enough.
In a certain sense, this simple greedy algorithm is inspired by the greedy approach to find $k$-tuple dominating sets in Klasing et al. \cite{klasing2004hardness}.

Couture et al. \cite{couture2008incremental} proposed a method which first computes a dominating set ($k=1$) by finding a maximal independent set in a graph. Next, their algorithm computes a maximal independent set for the vertices that are currently not $2$-dominated and adds those vertices to the dominating set to form a $2$-dominating set. This procedure is repeated $k$ times until a $k$-dominating set is found.

Gagarin et al. \cite{gagarin2013randomized} proposed a randomized algorithm which initializes a set $D$ as a random subset of graph vertices and then iteratively adds other vertices to $D$ if they are not dominated enough, mentioning some greedy ideas. The probability to initialize set $D$ randomly is shown to be optimal in general graphs. However, this probability had to be adjusted experimentally for graphs corresponding to real-world road networks in \cite{gagarin2018multiple}. The last paper also experimentally compares the randomized approach to the basic greedy heuristic.

\subsubsection*{Exact (deterministic) methods}
An ILP problem formulation for computing $k$-dominating sets is described in \cite{gagarin2018multiple}. 
The experiments show that this method clearly does not scale to the size of two main graphs considered in the paper.
Therefore a greedy search heuristic remains one of the main optimization tools in that research.
Also, the computational results in \cite{gagarin2018multiple} show that the ILP formulation solution approach scales less well for larger values of $k$.

\section{New Heuristic Search Methods for $k$-Domination}
\label{sec:method}
In this section we formally define the $k$-domination problem and present a novel formulation of this problem. This formulation is in turn used to develop two novel methods for solving the problem which use greedy and beam search heuristic ideas. 

We consider simple graphs $G=(V,E)$, where $V$ is a set of vertices and $E$ is a set of edges. 
Given a vertex $v\in V$, the \textit{open neighbourhood} of $v$ is the set of all its neighbours in $G$, i.e. all vertices adjacent to $v$. It is denoted by $N(v)$. The \textit{closed neighbourhood} of $v$ is $N(v)\cup \{v\}$, it is denoted by $N[v]$.
For a given positive integer $k$, a \emph{$k$-dominating set} of $G$ is a set $D \subseteq V$ such that each $v \in V$ is either an element of $D$ or is adjacent to at least $k$ elements of $D$. The \emph{$k$-domination problem} concerns finding a $k$-dominating set of $G$ which is as small as possible. This is formally defined as the following optimization problem:

\begin{equation}
\label{eq:dom_opt}
\begin{aligned}
& \argmin_{D\subseteq V} & & \vert D \vert \\
& \text{subject to} & & \forall \ v \in V \setminus D, \ \vert N(v) \cap D \vert \geq k
\end{aligned}
\end{equation}

In general, solving this optimization problem is $\mathcal{NP}$-hard \cite{Jacob1989,Chang2013}. For any $D\subseteq V$ and $v\in V$, we define the parameter $C(D,v)$, which indicates a level of coverage of neighbours of $v$ with respect to the set $D$ in $G$ as follows: 

\begin{equation}
\label{eq:dom_opt_C}
C(D, v) = \text{min} (k, \vert N(v) \cap D \vert) 
\end{equation}

\noindent Then the optimization problem (\ref{eq:dom_opt}) can be formulated as the optimization problem (\ref{eq:dom_opt_re}). We prove this is an equivalent problem formulation in Theorem \ref{thm:mul_re}.

\begin{equation}
\label{eq:dom_opt_re}
\begin{aligned}
& \argmax_{D\subseteq V} & & \sum_{v \in V \setminus D} C(D, v) \\
& \text{subject to} & & \forall \ v \in V \setminus D, \ \vert N(v) \cap D \vert \geq k
\end{aligned}
\end{equation}

\begin{theorem}
\label{thm:mul_re}
Given a graph $G$, a solution $D$ to the optimization problem defined in (\ref{eq:dom_opt_re}) is a minimum size $k$-dominating set in $G$.
\end{theorem}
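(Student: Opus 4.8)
The plan is to show that the two optimization problems (\ref{eq:dom_opt}) and (\ref{eq:dom_opt_re}) have exactly the same set of optimal solutions, by first noting that they share a common feasible region and then checking that, restricted to this region, the two objective functions are related by an order-reversing affine transformation.

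First I would observe that a set $D \subseteq V$ satisfies the constraint $\forall\, v \in V \setminus D,\ |N(v) \cap D| \geq k$ if and only if $D$ is a $k$-dominating set of $G$; this is immediate from the definition. Hence the feasible regions of (\ref{eq:dom_opt}) and (\ref{eq:dom_opt_re}) coincide and both equal the collection of all $k$-dominating sets of $G$, which is nonempty since $D = V$ always qualifies (then $V \setminus D = \emptyset$ and the constraint is vacuous).

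Next comes the key step. For any feasible $D$ and any $v \in V \setminus D$, the constraint gives $|N(v) \cap D| \geq k$, so by the definition (\ref{eq:dom_opt_C}) we get $C(D, v) = \min(k, |N(v) \cap D|) = k$. Summing over $v \in V \setminus D$ yields $\sum_{v \in V \setminus D} C(D, v) = k\,|V \setminus D| = k(|V| - |D|)$. Since $k$ is a fixed positive integer and $|V|$ is a constant, maximizing $k(|V| - |D|)$ over the feasible region is equivalent to minimizing $|D|$ over the same region. I would then conclude that $D$ solves (\ref{eq:dom_opt_re}) if and only if it solves (\ref{eq:dom_opt}), so any solution of (\ref{eq:dom_opt_re}) is a minimum-size $k$-dominating set, which is the assertion of the theorem.

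I do not anticipate a genuine obstacle here; the only point requiring care is the use of the feasibility constraint to collapse $C(D, v)$ to the constant $k$ on the feasible region. Without that constraint the objective in (\ref{eq:dom_opt_re}) would \emph{not} be a monotone function of $|D|$, and indeed it is precisely for infeasible, partially built sets $D$ that this reformulated objective supplies the extra discriminating information later exploited by the greedy and beam search heuristics — so it may be worth a remark to that effect, although it is not needed for the proof itself.
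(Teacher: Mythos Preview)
Your proof is correct and follows essentially the same approach as the paper: both arguments use the feasibility constraint to reduce $C(D,v)$ to the constant $k$, obtain $\sum_{v\in V\setminus D} C(D,v)=k(|V|-|D|)$, and conclude that maximizing this is equivalent to minimizing $|D|$. Your additional remarks on nonemptiness of the feasible region and on the discriminating role of the objective for infeasible $D$ are sound but not present in the paper's shorter version.
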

 
\begin{proof}
A set $D\subseteq V$ satisfying the constraints in (\ref{eq:dom_opt_re}) is a $k$-dominating set: each vertex $v \in V \setminus D$ is adjacent to at least $k$ elements in $D$. Therefore, $C(D,v)=\text{min} (k, \vert N(v) \cap D \vert) = k$ for all $v \in V \setminus D$. In turn, the value of the objective function in (\ref{eq:dom_opt_re}) equals $k (\vert V \vert - \vert D \vert)$. Since $\vert V \vert$ is a constant, the objective function is maximized when $D$ is a minimum size $k$-dominating set.
In other words, the optimization problem (\ref{eq:dom_opt_re}) is equivalent to maximizing $|V\backslash D|$ in $G$.
\end{proof}

Since the $k$-domination problem is $\mathcal{NP}$-hard, one normally must use heuristic methods 
to find a reasonably small $k$-dominating set (\cite{gagarin2013randomized}), sacrificing quality of the solution to a reasonable computational time (\cite{gagarin2018multiple,Wendy1,Wendy2}). The problem formulation in (\ref{eq:dom_opt_re}) uses the parameter $C(D,v)$ from (\ref{eq:dom_opt_C}) to model the level to which vertices are dominated by a set $D$. This contrasts with the formulation (\ref{eq:dom_opt}), which simply 
requires satisfaction of the constraints.
The additional information incorporated in formulation (\ref{eq:dom_opt_re}) can potentially be exploited by heuristic methods to make locally optimal decisions better.

In the following subsections we describe two heuristic solution methods for the $k$-domination problem which use formulation (\ref{eq:dom_opt_re}). These methods are based on greedy and beam search heuristics ideas.

\subsection{New Greedy Search Heuristic}
\label{sec:greedy}
The problem formulation in (\ref{eq:dom_opt_re}) motivates Algorithm \ref{alg:rand_gred_dom} which is a greedy search heuristic. The algorithm takes as input a graph $G=(V,E)$ and a positive integer $k$, and computes a $k$-dominating set $D$ for $G$. The algorithm initializes $D$ to be the empty set (line \ref{alg:rand_gred_dom:initial}). Next, it iteratively adds vertices to $D$ until it forms a $k$-dominating set. The vertex added at each step is determined by selecting uniformly at random a vertex from the set of vertices whose addition maximizes the unconstrained objective function in (\ref{eq:dom_opt_re}) (see lines \ref{alg:rand_gred_dom:start} to \ref{alg:rand_gred_dom:end}).

Convergence of Algorithm \ref{alg:rand_gred_dom} to a $k$-dominating set is a consequence of the fact that, unless a $k$-dominating set is formed earlier, the algorithm will converge to the case where $D=V$, which is trivially $k$-dominating in $G$. Algorithm \ref{alg:rand_gred_dom} is also a randomized algorithm: in each iteration, if several vertices can increase the objective function value by the same maximum amount, one of these vertices is selected uniformly at random for addition to $D$.

\begin{algorithm}
\label{alg:rand_gred_dom}
\caption{Greedy Search Heuristic}
 \KwIn{A graph $G=(V,E)$, a positive integer $k$.}
 \KwOut{A $k$-dominating set $D$ of $G$.}
 \BlankLine
\Begin{
Initialize $D= \lbrace \rbrace$ \\ \label{alg:rand_gred_dom:initial} 
 \While{$ |\lbrace v\in V \backslash D : |N(v)\cap D|<k \rbrace| > 0 $}{ \label{alg:rand_gred_dom:start}
		Find\,\ $\displaystyle U = \argmax_{u \in V \setminus D} \sum_{v \in V \setminus (D \cup \{u\})} C(D \cup \{u\}, v)$\\
		Sample $u \in U$ using a uniform distribution \\
		Put\,\ $D = D \cup \{u\}$
	} \label{alg:rand_gred_dom:end}
 return $D$
}
\end{algorithm}

To better explain effectiveness and efficiency of the heuristic ideas used in Algorithm \ref{alg:rand_gred_dom}, we define the difference function $\Delta(D,u)$ (\ref{eq:delta_C}) which represents the change in the objective function $C$ in (\ref{eq:dom_opt_re}) when a new vertex vertex $u$ is added to a given set $D\subset V$ ($u\not\in D$):
\begin{equation}
\label{eq:delta_C}
\begin{split}
\Delta(D, u) &= \sum_{v \in V \setminus (D \cup \{u\})} C(D \cup \{u\}, v) - \sum_{v \in V \setminus D} C(D, v) \\
 & = | \{ v \in N(u) : v \notin D, | N(v) \cap D  | < k \} | - \text{min}(k, | N(u) \cap D |)
\end{split}
\end{equation}
Now we have
\begin{equation}
\label{eq:delta_eq}
\begin{split}
& \max_{u \in V \setminus D}\ \ \Delta(D, u)\\
 & = \max_{u \in V \setminus D} \left(\sum_{v \in V \setminus (D \cup \{u\})} C(D \cup \{u\}, v)\right) - \sum_{v \in V \setminus D} C(D, v),\\ 
 \end{split}
\end{equation}
where the sum $\displaystyle \sum_{v \in V \setminus D} C(D, v)$ is constant for a given set $D\subset V$.
Therefore, in Algorithm \ref{alg:rand_gred_dom}, we have 
\begin{equation}
\label{eq:delta_eq_1}
\begin{split}
U & = \argmax_{u \in V \setminus D} \sum_{v \in V \setminus (D \cup \{u\})} C(D \cup \{u\}, v) \\
  & =  \argmax_{u \in V \setminus D}\ \ \Delta(D, u)
\end{split}
\end{equation}
which is used when implementing Algorithm \ref{alg:rand_gred_dom}. The computational complexity of Algorithm \ref{alg:rand_gred_dom} can be analyzed as follows.

\begin{theorem}
\label{thm:greedy_complexity}
Algorithm \ref{alg:rand_gred_dom} finds a $k$-dominating set in $G$ in $\mathcal{O}(n^3)$ time, $n=|V|$.
\end{theorem}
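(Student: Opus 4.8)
The plan is to bound separately the number of iterations of the \texttt{while} loop and the cost of a single iteration, exploiting the incremental quantity $\Delta(D,u)$ from (\ref{eq:delta_C}) together with a little auxiliary bookkeeping. First I would observe that the loop body runs at most $n$ times: each pass chooses $u\in V\setminus D$ and replaces $D$ by $D\cup\{u\}$, so $|D|$ strictly increases each time; after at most $n$ passes we reach $D=V$, which is trivially $k$-dominating and makes the loop guard false (this is exactly the convergence argument already given right after Algorithm \ref{alg:rand_gred_dom}).

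Next I would fix the data structures. Besides the adjacency lists of $G$, the implementation keeps a Boolean array recording membership in $D$ and, for every vertex $v$, an integer counter $c(v)=|N(v)\cap D|$; for the initial empty $D$ all counters are $0$, so this initialisation costs $\mathcal{O}(n)$ (and reading $G$ costs $\mathcal{O}(n+|E|)=\mathcal{O}(n^2)$, which is dominated by the rest). I would also maintain the number of still-undominated vertices $|\{v\in V\setminus D:c(v)<k\}|$, so that the loop guard is tested in $\mathcal{O}(1)$; alternatively, rescanning the counters costs $\mathcal{O}(n)$ per iteration, which is again dominated.

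The core of the argument is the per-iteration cost. By (\ref{eq:delta_eq_1}) the set $U$ is the $\argmax$ of $\Delta(D,u)$ over $u\in V\setminus D$, and by (\ref{eq:delta_C}),
\[
\Delta(D,u)=\big|\{\,v\in N(u):v\notin D,\ c(v)<k\,\}\big|-\min\!\big(k,\,c(u)\big).
\]
With the membership array and the counters available, evaluating this expression for one vertex $u$ is a single scan of the adjacency list of $u$, i.e. $\mathcal{O}(\deg(u))$ time. Summing over all candidates, computing $\Delta(D,u)$ for every $u\in V\setminus D$ and taking the maximum therefore costs $\mathcal{O}\!\big(\sum_{u\in V}\deg(u)\big)=\mathcal{O}(|E|)=\mathcal{O}(n^2)$; sampling a vertex uniformly from $U$ is $\mathcal{O}(|U|)=\mathcal{O}(n)$; and after $u$ is chosen, inserting it into $D$, incrementing $c(v)$ for each $v\in N(u)$, and updating the undominated-count costs $\mathcal{O}(\deg(u))=\mathcal{O}(n)$. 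Hence one iteration runs in $\mathcal{O}(n^2)$, and over at most $n$ iterations Algorithm \ref{alg:rand_gred_dom} runs in $\mathcal{O}(n^3)$ time.

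The step I expect to be the main obstacle — or at least the point that must be stated with care — is the $\mathcal{O}(n^2)$ bound on a single iteration. Re-evaluating the objective $\sum_{v\in V\setminus(D\cup\{u\})}C(D\cup\{u\},v)$ from scratch for each candidate $u$ would cost $\mathcal{O}(n^3)$ per iteration and $\mathcal{O}(n^4)$ overall; the reduction to $\mathcal{O}(n^3)$ relies precisely on (i) replacing the objective by the difference $\Delta(D,u)$ via (\ref{eq:delta_eq_1}) and (\ref{eq:delta_C}), which recasts line 4 of the algorithm, and (ii) maintaining the counters $c(v)=|N(v)\cap D|$ incrementally so that each $\Delta$-evaluation is linear in $\deg(u)$ rather than in $n\cdot|D|$. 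I would make these two points explicit and note that, in terms of $|E|$, the bound is in fact $\mathcal{O}(|E|\cdot n)$, which is $\mathcal{O}(n^3)$ in the worst case.
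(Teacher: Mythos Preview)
Your proof is correct and follows the same approach as the paper: bound the number of iterations by $n$, replace the full objective by the difference $\Delta(D,u)$ via (\ref{eq:delta_eq_1}) and (\ref{eq:delta_C}), and show each iteration costs $\mathcal{O}(n^2)$. Your version is more detailed than the paper's---in particular you make explicit the counters $c(v)=|N(v)\cap D|$ that are needed to evaluate $\Delta(D,u)$ in $\mathcal{O}(\deg(u))$ time, a point the paper's proof leaves implicit---and you obtain the sharper $\mathcal{O}(n\,|E|)$ bound along the way.
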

 
\begin{proof}
In the worst case, the while loop will terminate after $n$ iterations when all vertices have been added to $D$. Each iteration of the while loop examines all vertices currently not in the set $D$. For each such vertex $u\in V\backslash D$, the change in the objective function following its addition is computed using the difference function $\Delta(D,u)$ from (\ref{eq:delta_C}). Evaluating the difference function (\ref{eq:delta_C}) in the worst case when $|N(u)|\in \Theta(n)$ can be done in $O(n)$ time. Therefore each iteration of the while loop takes $O(n^2)$ time.
\end{proof}

To develop a better intuition of how Algorithm \ref{alg:rand_gred_dom} works, notice that at each step the algorithm does not determine the vertex added to $D$ solely based on the number of vertices currently not dominated in its open or closed neighbourhood. In this context, a vertex is not dominated if it is not an element of $D$ and not adjacent to at least $k$ vertices in $D$. The algorithm also considers how much  the vertices in question are already dominated. Specifically, all other things being equal, a vertex which is currently least dominated will be added to $D$ because it contributes least to the sum $\sum_{v \in V \setminus D} C(D, v)$ (note that the sum is over vertices currently not in the set $D$, see definition (\ref{eq:delta_C})). For example, if $C(D,v) < k$ for all $v\in V\backslash D$, we have $\sum_{v \in V \setminus (D \cup \{u\})} C(D \cup \{u\}, v) = \sum_{v \in V \setminus D} C(D, v) + |N(u)\cap (V \setminus D)| - |N(u)\cap D|$ for each $u\in V\backslash D$.

To illustrate this concept, consider the graph displayed in Figure \ref{fig:three_vertex_graph}. Suppose we wish to compute a $2$-dominating set and are given $D=\lbrace a \rbrace$. If the next vertex added to $D$ is determined solely based on the number of vertices currently not dominated in its corresponding open neighbourhood, vertices $b$ and $c$ are equally likely to be added -- each of the neighbourhoods contains exactly one vertex currently not dominated.
This may give a suboptimal result because adding vertex $b$ will not result in a $2$-dominating set. The result is the same if we consider the closed neighbourhood instead of the open neighbourhood of the vertices. On the other hand, the proposed algorithm will add to $D$ the vertex $c$, which provides the optimal solution.

\begin{figure}
\begin{center}
\vspace{5mm}
\includegraphics[width=55mm]{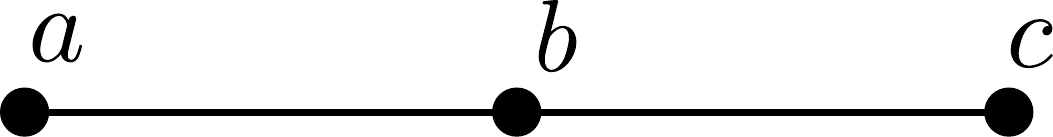}
\caption{An illustrative graph.}
\label{fig:three_vertex_graph}
\end{center}
\end{figure} 

\subsection{Beam Search Heuristic}
\label{sec:beam}
Algorithm \ref{alg:rand_gred_dom} is a greedy algorithm, and consequently it may converge to a suboptimal solution. To illustrate this, consider again the graph displayed in Figure \ref{fig:three_vertex_graph} and the problem of computing a $2$-dominating set. The optimal $2$-dominating set for this problem instance is $\lbrace a, c \rbrace$. However, applying Algorithm \ref{alg:rand_gred_dom} to this problem will first add vertex $b$ to $D$, followed by vertices $a$ and $c$, to give a larger $2$-dominating set $\lbrace a, b, c \rbrace$.

To overcome this limitation, we propose a generalization of Algorithm \ref{alg:rand_gred_dom} which uses a beam search heuristic instead of a pure greedy approach. A beam search algorithm is an iterative search method which at each step maintains a set of best intermediate or partial solutions \cite{rusnor2009}. The maximum size of this set is a constant hyper-parameter called the \emph{beam width}. The proposed beam search with the beam width of one is equivalent to the greedy search of Section \ref{sec:greedy}. The general beam search heuristic implemented in this work is described in Algorithm \ref{alg:rand_beam_dom}. This algorithm takes as input a graph $G=(V,E)$ and two positive integer parameters $k$ and $b$, and, using a beam of width $b$, finds a $k$-dominating set $D$ for $G$ by considering a collection $S$ of vertex subsets of $G$.

The algorithm initializes $D$ to be the empty set (line \ref{alg:rand_beam_dom:initial_D}) and $S$ to be a list containing a single partial solution corresponding to the empty set. In this context, a partial or intermediate solution is a subset of graph vertices. Next, the algorithm iteratively expands all subsets of vertices in $S$ (lines \ref{alg:rand_beam_dom:exp_1} to \ref{alg:rand_beam_dom:exp_2}). Each partial solution $s\in S$, $s\subseteq V$, is expanded to form a set of partial solutions by adding one vertex currently not in $s$ in all possible ways, i.e. like in an exhaustive search. 
For example, consider the graph in Figure \ref{fig:three_vertex_graph} and a partial solution set $\lbrace b \rbrace$. 
Expanding this vertex subset in all possible ways gives two partial solutions $\lbrace a, b \rbrace, \lbrace b, c \rbrace$. Similarly, expanding the empty set $\lbrace \rbrace$ in the context of the same graph gives three partial solutions $\lbrace a \rbrace, \lbrace b \rbrace, \lbrace c \rbrace$. The algorithm next removes copies of partial solutions from the list of subsets $S$ (line \ref{alg:rand_beam_dom:duplic}). The list of intermediate solutions is then sorted in non-ascending (descending) order with respect to the objective function value in (\ref{eq:dom_opt_re}). If several partial solutions have the same objective function value, they are ordered randomly in $S$ (line \ref{alg:rand_beam_dom:sort}). Then the top $b$ partial solutions are retained in $S$ (line \ref{alg:rand_beam_dom:retain}). 
Finally, if there is a $k$-dominating set $s$ in $S$, i.e. a set $s$ satisfying constraints in (\ref{eq:dom_opt_re}), we put $D=s$, and it is returned as a solution to the problem (lines \ref{alg:rand_beam_dom:check_s} and \ref{alg:rand_beam_dom:return} respectively).

Convergence of Algorithm \ref{alg:rand_beam_dom} to a $k$-dominating set is a consequence of increasing cardinality of partial solutions in iteration and the fact that, unless a $k$-dominating set is found earlier, the algorithm will converge to the case where $D=V$, which is $k$-dominating. Algorithm \ref{alg:rand_beam_dom} is also a randomized algorithm: as stated above, when sorting is performed, if several partial solutions have the same unconstrained objective function value, they are ordered randomly. The computational complexity of Algorithm \ref{alg:rand_beam_dom} is stated in Theorem \ref{thm:beam_complexity}.

\begin{algorithm}
\label{alg:rand_beam_dom}
\caption{Beam Search Heuristic}
 \KwIn{A graph $G=(V,E)$, positive integers $k$ and $b$.}
 \KwOut{A $k$-dominating set $D$ of $G$.}
 \BlankLine
\Begin{
Initialize $D = \lbrace \rbrace$ \\ \label{alg:rand_beam_dom:initial_D}
Initialize $S = [\lbrace \rbrace]$ \\ \label{alg:rand_beam_dom:initial_S}
 \While{$ D = \lbrace \rbrace $}{ \label{alg:rand_beam_dom:start}
 		$S' = []$ \\ \label{alg:rand_beam_dom:exp_1}
 		\For{$s \in S$}{
 			$S' = S' \cup \text{expand}(s)$
 		}
		$S = S'$ \\ \label{alg:rand_beam_dom:exp_2}
		remove\_duplicates($S$) \\ \label{alg:rand_beam_dom:duplic}
		sort\_descending($S$) \\ \label{alg:rand_beam_dom:sort}
		$S=S[1 \dots b]$ \\ \label{alg:rand_beam_dom:retain}
		\For{$s \in S$}{
			\If{$s$ is $k$-dominating}{
				$D=s$ \\ \label{alg:rand_beam_dom:check_s}
			}
		}
	}
 return $D$ \label{alg:rand_beam_dom:return}
}
\end{algorithm}

\begin{theorem}
\label{thm:beam_complexity}
Algorithm \ref{alg:rand_beam_dom} finds a $k$-dominating set in $\mathcal{O}(b^2n^3)$ time, where $n=|V|$ and $b$ is the beam width.
\end{theorem}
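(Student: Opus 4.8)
The plan is to bound separately the number of iterations of the while loop and the work done in one iteration, and then multiply. Correctness (that the returned $D$ is a $k$-dominating set) comes for free: the loop exits only when $D\neq\{\}$, and $D$ is assigned only a partial solution $s$ that has passed the ``$s$ is $k$-dominating'' test.

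First I would show that the loop executes at most $n$ times. By induction on the iteration index, after the $j$-th pass every member of $S$ is a subset of $V$ of cardinality exactly $j$: the initial list $[\,\{\}\,]$ has this property for $j=0$, a single $\mathrm{expand}$ call raises every cardinality by one, and removing duplicates, sorting by objective value and truncating to the top $b$ leave cardinalities unchanged. The same bookkeeping shows $S$ is never empty while the loop runs, since any partial solution of cardinality $j-1<n$ has a vertex outside it and hence a nonempty expansion. Consequently, were the loop to reach its $n$-th iteration, $S$ would equal $[\,V\,]$ (the unique subset of $V$ of cardinality $n$), which is $k$-dominating, so $D$ would be set and the loop would halt; hence at most $n$ iterations occur.

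Next I would bound one iteration by $\mathcal{O}(b^2n^2)$. At the start of an iteration $|S|\le b$ (guaranteed by the truncation step for every iteration after the first, and $|S|=1$ initially), so expanding every $s\in S$ in all possible ways yields $|S'|\le bn$ partial solutions. To keep this cheap I would store with each partial solution $s$ both its objective value in (\ref{eq:dom_opt_re}) and the vector $\bigl(|N(v)\cap s|\bigr)_{v\in V}$; when $s$ is expanded to $s\cup\{u\}$, the new objective value follows from the old one via the difference function $\Delta(s,u)$ of (\ref{eq:delta_C}), and this value and the updated vector are computed in $\mathcal{O}(n)$ time exactly as in the proof of Theorem \ref{thm:greedy_complexity}. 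Thus building $S'$ together with all its objective values costs $\mathcal{O}(bn\cdot n)=\mathcal{O}(bn^2)$. Removing duplicates among the at most $bn$ partial solutions and sorting them by objective value costs $\mathcal{O}(b^2n^2)$ in the worst case and dominates the iteration. Finally, testing whether a partial solution $s$ of cardinality $j$ is $k$-dominating reduces to checking whether its stored objective value equals $k(n-j)$: by the argument in the proof of Theorem \ref{thm:mul_re}, since each term $C(s,v)$ with $v\in V\setminus s$ is at most $k$, this equality holds exactly when every such term attains $k$, i.e. exactly when $s$ is $k$-dominating. So the loop over the retained $b$ solutions costs $\mathcal{O}(b)$, and one iteration runs in $\mathcal{O}(b^2n^2)$ time. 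Multiplying by the $\mathcal{O}(n)$ iteration bound gives the claimed $\mathcal{O}(b^2n^3)$; taking $b=1$ recovers Theorem \ref{thm:greedy_complexity}, as it should since beam width one is the greedy algorithm.

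The step I expect to be the main obstacle is bounding the duplicate-removal and sorting step: a naive all-pairs comparison of the $\Theta(bn)$ partial solutions in $S'$ at $\Theta(n)$ per comparison would give $\Theta(b^2n^3)$ per iteration and hence $\Theta(b^2n^4)$ overall, so the argument needs care — for instance maintaining incrementally a hash (or a canonical sorted encoding) of each partial solution so that duplicate detection uses $\mathcal{O}(1)$-time comparisons, and sorting only on the integer objective value. A related point requiring care is justifying the incremental maintenance of the objective values and auxiliary counts, so that no iteration ever pays $\Theta(n^2)$ to evaluate the objective of one of its $\Theta(bn)$ candidate solutions from scratch.
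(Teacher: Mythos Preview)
Your proposal is correct and follows essentially the same decomposition as the paper: at most $n$ iterations of the while loop, each dominated by the $\mathcal{O}(b^2n^2)$ cost of duplicate removal on the $\mathcal{O}(bn)$ expanded partial solutions, with objective values obtained in $\mathcal{O}(n)$ per expansion via the difference function~(\ref{eq:delta_C}). You are in fact more careful than the paper on two points the paper simply asserts---why the loop halts within $n$ rounds, and why duplicate removal fits in $\mathcal{O}(b^2n^2)$ rather than the naive $\mathcal{O}(b^2n^3)$---and your suggested fix (incremental hashes/canonical encodings for $\mathcal{O}(1)$ equality tests) is exactly what is needed to make the paper's stated bound rigorous.
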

 
\begin{proof}
In the worst case, the while loop will terminate after $n$ iterations. Each iteration of the while loop performs an expansion of intermediate solutions in $S$, which contains $O(b)$ subsets of vertices. 
Expanding each individual partial solution takes $O(n)$ steps. Then the list $S'$ will contain $O(nb)$ elements. 
Checking for copies of subsets in line \ref{alg:rand_beam_dom:duplic} can be done in $O(b^2n^2)$ time.
Evaluating a single element in the resulting list with respect to the objective function can be done in $O(n)$ time. 
Therefore, evaluating and sorting all elements in this list with respect to the objective function takes $O(bn^2 + bn \log bn)$ steps. The overall computational time complexity is therefore $O(nbn + nb^2n^2 + nbn^2 + nbn\log bn) = O(b^2n^3)$.
\end{proof}

Selecting the beam width hyper-parameter for Algorithm \ref{alg:rand_beam_dom} represents a trade-off between computational complexity and the quality of obtained solution. That is, a larger beam width results in a better exploration of the partial solution space and the potential for finding a better solution. However, a larger beam width also results in higher computational complexity. To illustrate this, consider again the graph in Figure \ref{fig:three_vertex_graph} and the problem of computing a $2$-dominating set. Applying Algorithm \ref{alg:rand_beam_dom} to this graph with the beam width of one returns the $2$-dominating set $\lbrace a, b, c \rbrace$. On the other hand, applying this algorithm to the same graph with the beam width of three returns the smaller $2$-dominating set $\lbrace a, c \rbrace$. As mentioned earlier, Algorithm \ref{alg:rand_beam_dom} with the beam width equal to one reduces to Algorithm \ref{alg:rand_gred_dom}.

In Theorem \ref{thm:mul_equ_methods}, we establish a relationship between Algorithm \ref{alg:rand_beam_dom} and a standard greedy approach for computing dominating sets ($k=1$) described in \cite{parekh1991analysis,gagarin2018multiple}. 
Recall that the standard greedy algorithm initializes a set $D$ to be the empty set and iteratively adds vertices to $D$ until it forms a dominating set. The vertex added to $D$ at each step is determined by selecting uniformly at random a vertex from the set of vertices whose closed neighbourhood contains a maximum number of vertices currently not dominated.

\begin{theorem}
\label{thm:mul_equ_methods}
When computing a dominating set ($k=1$), Algorithm \ref{alg:rand_beam_dom} with the beam width of one is equivalent to the standard greedy algorithm \cite{parekh1991analysis,gagarin2018multiple}.
\end{theorem}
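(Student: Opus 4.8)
Here is how I would prove Theorem~\ref{thm:mul_equ_methods}. The plan is to reduce the claim to a per-iteration comparison of selection rules. First I would invoke the observation already recorded in Section~\ref{sec:beam} that Algorithm~\ref{alg:rand_beam_dom} run with beam width $b=1$ carries a single partial solution, expands it in all possible ways, keeps the unique best extension (breaking ties uniformly at random), and stops once that set dominates; hence it coincides with Algorithm~\ref{alg:rand_gred_dom}. So it suffices to show that Algorithm~\ref{alg:rand_gred_dom} with $k=1$ is equivalent to the standard greedy algorithm. Both procedures have the same shape: start from $D=\emptyset$; while $D$ is not dominating, pick a vertex uniformly at random from a certain argmax set and add it to $D$; return $D$. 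Since the initialization and the stopping criterion ($D$ dominating, i.e. the constraint in (\ref{eq:dom_opt_re}) with $k=1$) already agree, equivalence \emph{as randomized procedures} (identical distributions over execution traces and outputs) will follow by induction on the iteration count, once I show that for every $D\subseteq V$ that is not yet dominating the two argmax sets coincide.

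The core step is an identity. Fix such a $D$ and a vertex $u\in V\setminus D$, and set $A(u)=|\{v\in N(u): v\notin D,\ |N(v)\cap D|=0\}|$, the number of open neighbours of $u$ that are currently undominated. For $k=1$, the difference function (\ref{eq:delta_C}) specializes to $\Delta(D,u)=A(u)-\min(1,|N(u)\cap D|)$, i.e. $\Delta(D,u)=A(u)-1$ if $u$ has a neighbour in $D$ and $\Delta(D,u)=A(u)$ otherwise. The standard greedy score of $u$ is $g(u)=|\{w\in N[u]: w\notin D,\ |N(w)\cap D|=0\}|$, the number of currently undominated vertices in the \emph{closed} neighbourhood of $u$; splitting off the contribution of $u$ itself gives $g(u)=A(u)+1$ when $u$ has no neighbour in $D$ (so $u$ is undominated and counts), and $g(u)=A(u)$ when it does (so $u$ is already dominated and does not count). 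In both cases a short check yields $g(u)=\Delta(D,u)+1$ for every $u\in V\setminus D$.

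Because $g$ and $\Delta$ differ by the additive constant $1$ on $V\setminus D$, we obtain $\argmax_{u\in V\setminus D}\Delta(D,u)=\argmax_{u\in V\setminus D}g(u)$; combining this with (\ref{eq:delta_eq_1}) shows that the set $U$ selected by Algorithm~\ref{alg:rand_gred_dom} equals the candidate set of the standard greedy step, \emph{provided} the latter is taken over $V\setminus D$. To close this gap I would note that any $u\in D$ has all its neighbours dominated (each is adjacent to $u\in D$), so $g(u)=0$; and whenever $D$ is not dominating there exists an undominated vertex $w$, for which $w\in N[w]$ gives $g(w)\geq 1>0$. Hence every maximizer of $g$ over all of $V$ automatically lies in $V\setminus D$, so restricting the standard greedy to $V\setminus D$ changes nothing. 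Feeding identical candidate sets and identical uniform tie-breaking into the inductive step then completes the proof.

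I expect the only real subtlety --- hence the ``main obstacle'' --- to be bookkeeping rather than depth: correctly reconciling the closed- versus open-neighbourhood mismatch (the ``$+1$'' accounting for $u$ itself) with the $-\min(k,|N(u)\cap D|)$ penalty term in $\Delta$, and being precise that ``equivalent'' must be read at the level of randomized algorithms (same distribution over runs), since both methods break ties uniformly at random and there is nothing further to compare once the candidate sets are shown to coincide.
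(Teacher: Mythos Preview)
Your proposal is correct and follows essentially the same approach as the paper: both arguments boil down to the identity that, for $k=1$, the selection score in Algorithm~\ref{alg:rand_gred_dom} (equivalently, Algorithm~\ref{alg:rand_beam_dom} with $b=1$) and the standard greedy score differ by the additive constant $1$ on $V\setminus D$, via the same two-case analysis according to whether $u$ is already dominated. Your write-up is somewhat more careful than the paper's---you make the inductive coupling of the randomized procedures explicit and you separately justify that restricting the standard greedy maximization to $V\setminus D$ is harmless---but these are refinements of presentation rather than a different route.
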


\begin{proof}
For the beam width of one, Algorithm \ref{alg:rand_beam_dom} behaves greedily and at each step selects the vertex which maximizes the change in the unconstrained objective function of (\ref{eq:dom_opt_re}). It is possible to see that for the case $k=1$, the change in the objective function value by adding a vertex $v$ equals the number of not dominated vertices in the closed neighbourhood of $v$ minus one. In other words, the selection criteria functions used by Algorithm \ref{alg:rand_beam_dom} and the standard greedy algorithm to rank vertices only differ by a constant value of minus one, implying the vertex rankings are the same. 

Consider two possible mutually exclusive cases corresponding to $v$ being currently dominated or not.
If $v$ is currently not dominated, the change in the objective function value by adding $v$ equals the number of not dominated vertices in the open neighbourhood of $v$, which is the number of not dominated vertices in the closed neighbourhood of $v$ minus one.
If $v$ is currently dominated, the change in the objective function value by adding $v$ equals the number of not dominated vertices in the open neighbourhood of $v$ minus one, which is, in this case, the number of not dominated vertices in the closed neighbourhood of $v$ minus one.
\end{proof}

\section{Computational Results and Analysis}
\label{sec:results}
In this section we present an empirical evaluation of the two proposed heuristic methods for computing $k$-dominating sets with respect to two baseline methods. We perform this evaluation using a set of graphs corresponding to street network reachability graphs. The $k$-domination problem with respect to this class of graphs can be used to model facility location problems in street networks \cite{gagarin2018multiple}.

The remainder of this section is structured as follows. In Section \ref{sec:results:reach} we formally define the concept of a street network reachability graph and the corresponding facility location problem. In Section \ref{sec:results:stn} we present details of the street networks used in this evaluation. Section \ref{sec:results:baseline} describes the baseline methods against which the proposed methods are evaluated. Finally, in Section \ref{sec:results:comparsion} we present empirical results of our evaluation.

\subsection{Reachability Graphs and Facility Location}
\label{sec:results:reach}
A street network can be modelled as a weighted undirected graph $G^s=(V^s,E^s, w:E^s\rightarrow \mathbb{R})$, where the set of vertices $V^s$ corresponds to road intersections and dead-ends, while the set of edges $E^s$ corresponds to road segments connecting these vertices. The weight function $w$ assigns to each edge the length of the corresponding road segment measured in meters \cite{corcoran2013characterising}. The street network of Cardiff city modelled as such a graph is displayed in Figure \ref{fig:cardiff_st}.

Given a street network graph $G^s=(V^s,E^s, w:E^s\rightarrow \mathbb{R})$, we define its reachability graph $G^r_t=(V^r,E^r_t)$ as a simple unweighted graph with $V^r = V^s$ and $(u, v)\in E^r_t$ if and only if the length of shortest path (distance) between the corresponding vertices $u$ and $v$ in $G^s$ is less than a specified reachability threshold of $t$ meters \cite{gagarin2018multiple}. The reachability graph corresponding to the Cardiff city street network of Figure \ref{fig:cardiff_st} for $t=500$ meters is illustrated in Figure \ref{fig:cardiff_st_r}. In this figure, for the vertex represented by a blue circle, all adjacent vertices in the corresponding reachability graph $G^r_t$, $t=500$, are represented by red circles.

The $k$-domination problem with respect to a street network reachability graph is a useful model for facility location problems \cite{gagarin2018multiple}. By placing the facility in question at the locations corresponding to a $k$-dominating set, we ensure that any agent wishing to use the facility in the street network has a guaranteed minimum level of access options. Furthermore, by minimizing the size of a $k$-dominating set, we minimize the cost of providing this facility.

\begin{figure}
\begin{center}
\subfigure[]{\includegraphics[width=5.5cm]{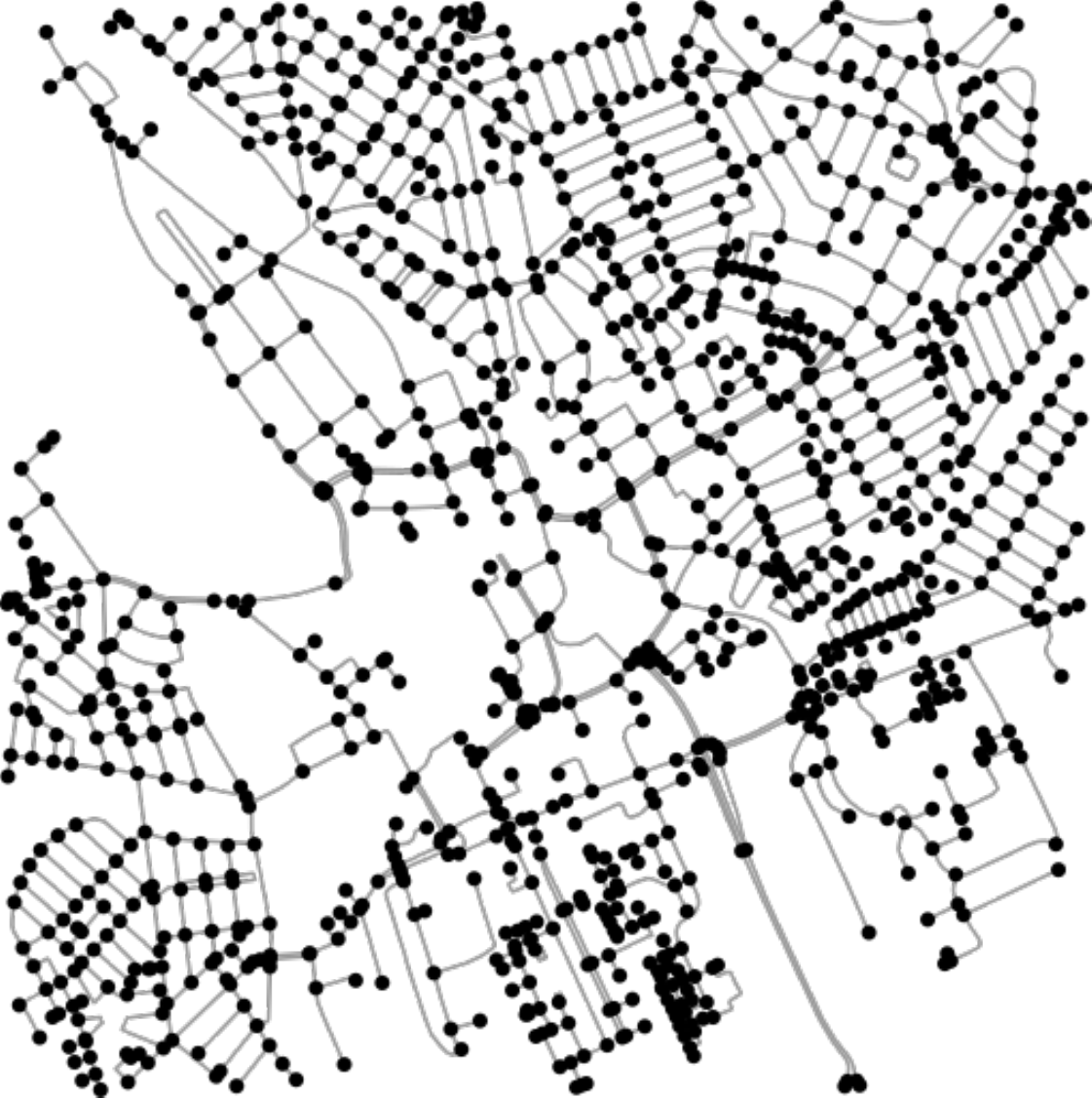}
\label{fig:cardiff_st}}
\hspace{.1cm}
\subfigure[]{\includegraphics[width=5.5cm]{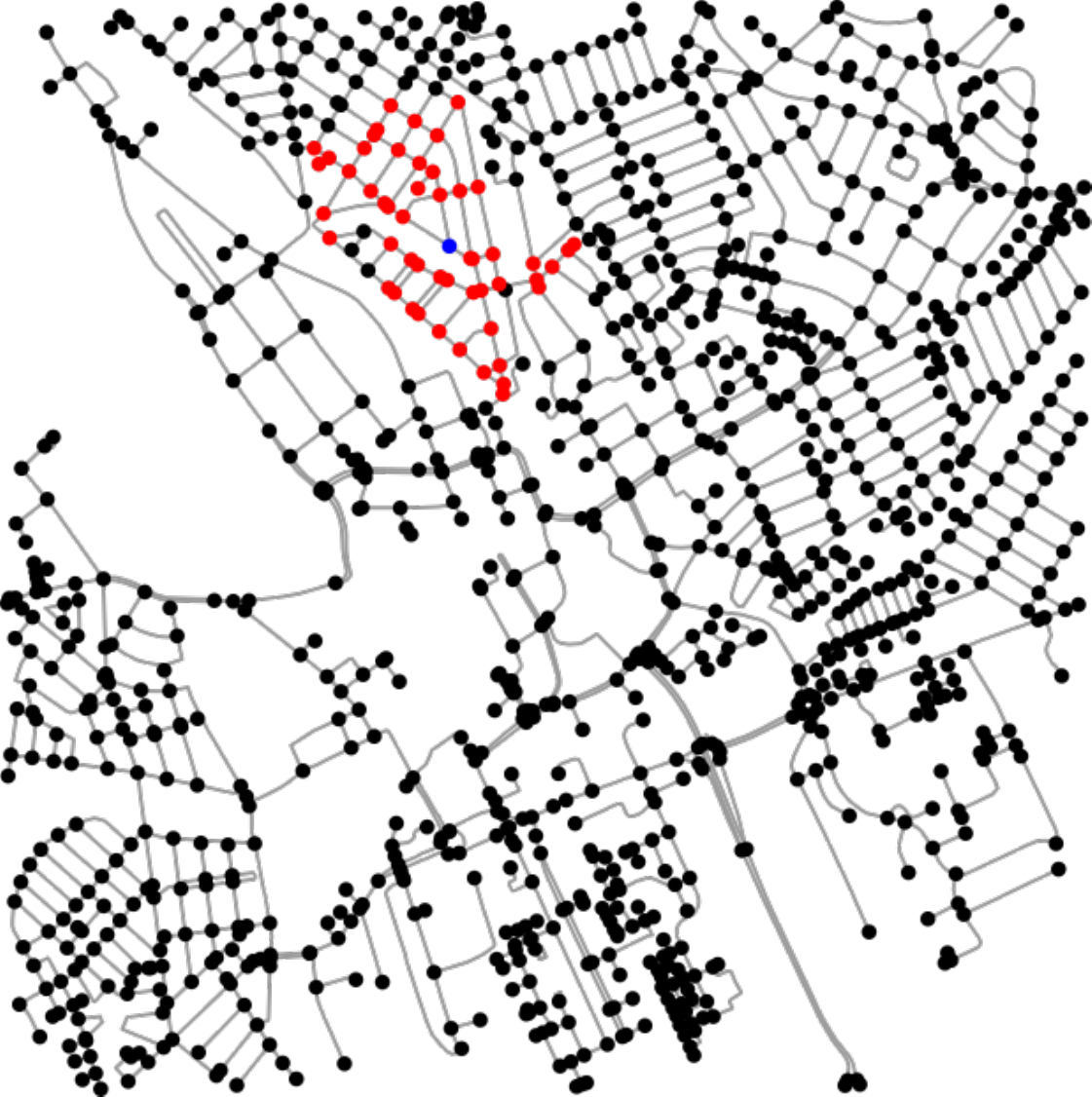}
\label{fig:cardiff_st_r}}
\caption{(a) A graph modelling the street network of Cardiff city; (b) An illustration of reachability graph for the Cardiff city street network: all red vertices are adjacent to a vertex represented by a blue circle.}
\label{fig:cardiff}
\end{center}
\end{figure}

\begin{figure}
\begin{center}
\vspace{5mm}
\includegraphics[width=100mm]{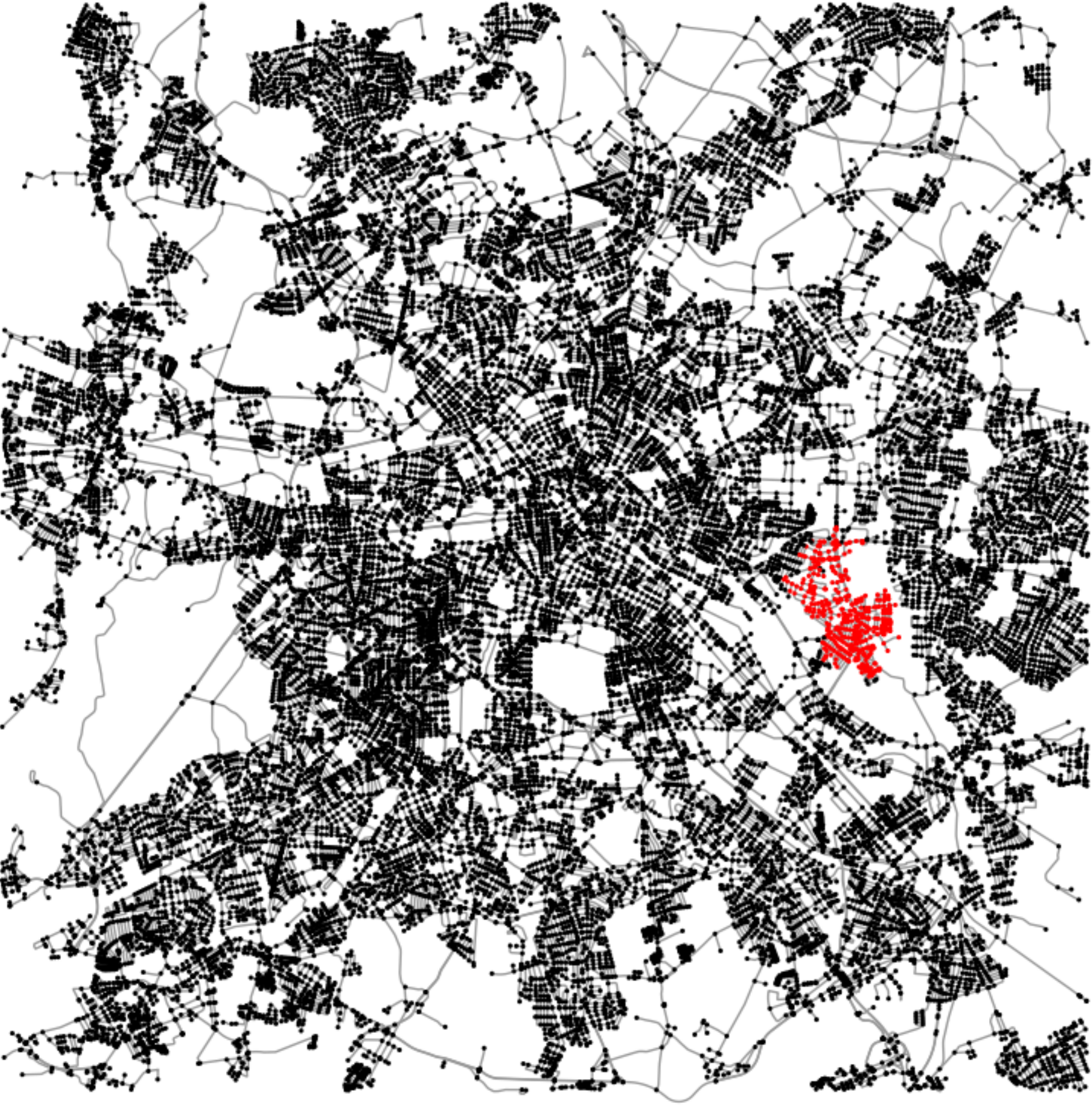}
\caption{A graph modelling the street network of Berlin city. All red vertices are adjacent to a given single vertex in the corresponding reachability graph.}
\label{fig:berlin_st}
\end{center}
\end{figure}

\subsection{Street Networks}
\label{sec:results:stn}
To evaluate the proposed methods with respect to street network reachability graphs we considered 20 medium sized street networks corresponding to twenty UK cities and 5 large sized street networks corresponding to international cities. For each city we selected a location in the city center and extracted the street network graph $G^s$ within a bounding box centred at this location. For each UK and international city a 3 and 15 kilometer bounding box respectively was used. The street networks in question were obtained from OpenStreetMap which is a crowdsourcing project for geographical data \cite{boeing2017osmnx}. For each UK and international street network, the corresponding reachability graph $G^r_t$ was computed using a reachability threshold $t$ of 500 and 3000 meters respectively. The reachability graphs for the cities of Cardiff and Berlin computed using the above approach are illustrated in Figures \ref{fig:cardiff} and \ref{fig:berlin_st} respectively.

Tables \ref{table:city_graph_stats} and \ref{table:city_graph_stats_international} display the names of the UK and international cities respectively, the number of vertices and edges in the corresponding street network graphs $G^s$, and the number of vertices and edges in the corresponding reachability graphs $G^r_t$.

\begin{table}
\centering
\begin{tabular}{ c|c|c|c } 
 \hline
  City Name & No. vertices & No. edges & No. edges \\
  & $G^s$ ($G^r_t$) & $G^s$ & $G^r_t$  \\
 \hline
 Bath & 910 & 1,147 & 18,560 \\
 Belfast & 1,700 & 2,169 & 62,617 \\
 Brighton & 976 & 1,342 & 35,012 \\
 Bristol & 1,569 & 2,048 & 47,522 \\
 Cardiff & 1,127 & 1,466 & 23,155 \\
 Coventry & 1,175 & 1,507 & 26,689 \\
 Exeter & 1,250 & 1,475 & 31,997 \\
 Glasgow & 1,137 & 1,546 &  24,323 \\
 Leeds & 1,647 & 2,197 & 56,511 \\
 Leicester & 1,531 & 2,027 & 48,219 \\
 Liverpool & 1,273 & 1,721 &  42,564 \\
 Manchester & 1,991 & 2,696 & 77,286 \\
 Newcastle & 1,109 & 1,402 & 26,614 \\
 Nottingham & 1,739 & 2,134 & 51,595 \\
 Oxford & 479 & 581 & 8,396 \\
 Plymouth & 1,122 & 1,463 & 35,070 \\
 Sheffield & 1,582 & 2,065 & 50,534 \\
 Southampton & 796 & 1,062 & 19,942 \\
 Sunderland & 1,346 & 1,783 & 42,013 \\
 York & 1,044 & 1,228 & 23,774 \\
 \hline
\end{tabular}
\caption{The number of vertices and edges in the street network graph $G^s$ and the corresponding reachability graph $G^r_t$ for $20$ UK cities.}
\label{table:city_graph_stats}
\end{table}

\begin{table}
\centering
\begin{tabular}{ c|c|c|c } 
 \hline
  City Name & No. vertices & No. edges & No. edges \\
  & $G^s$ ($G^r_t$) & $G^s$ & $G^r_t$  \\
 \hline
 Belgrade, Serbia & 22,218 & 28,465 & 9,092,430 \\
 Berlin, Germany & 31,413 & 46,948 & 10,356,466 \\
 Boston, USA & 34,713 & 50,190 & 23,379,262 \\
 Dublin, Ireland & 35,172 & 41,744 & 20,513,936 \\
 Minsk, Belarus & 11,388 & 16,217 & 1,387,938 \\
 \hline
\end{tabular}
\caption{The number of vertices and edges in the street network graph $G^s$ and the corresponding reachability graph $G^r_t$ for $5$ international cities.}
\label{table:city_graph_stats_international}
\end{table}

\subsection{Baseline Methods}
\label{sec:results:baseline}
We considered the standard heuristic algorithm (``standard greedy") \cite{parekh1991analysis,gagarin2018multiple} and the algorithm by Couture et al. \cite{couture2008incremental} as baseline solution methods. As discussed in the related works section of this paper, these are current state-of-the-art heuristic solution methods for the $k$-domination problem. We now briefly review each of these methods in turn.

The standard heuristic algorithm (``standard greedy") \cite{parekh1991analysis,gagarin2018multiple} initializes a set $D$ to be the empty set and iteratively adds vertices to $D$ until it forms a $k$-dominating set. The vertex added to $D$ at each step is determined by selecting uniformly at random a vertex from the set of vertices whose closed neighbourhood currently contains a maximum number of not dominated enough vertices.

The method of Couture et al. \cite{couture2008incremental} first computes a dominating set ($k=1$) by computing a maximal independent set. Next, it computes a maximal independent set for those vertices that are currently not $2$-dominated, and adds them to the dominating set to form a $2$-dominating set. This procedure is repeated $k$ times until a $k$-dominating set is found. In our implementation, a greedy randomized sequential algorithm was used to compute the maximal independent sets \cite{blelloch2012greedy}.

\subsection{Empirical Results}
\label{sec:results:comparsion}
This section presents an empirical evaluation of the proposed methods for computing $k$-dominating sets in the street network reachability graphs described in Section \ref{sec:results:stn} with respect to the baseline methods described in Section \ref{sec:results:baseline}. 

The beam search heuristic method of Algorithm \ref{alg:rand_beam_dom} has a single hyper-parameter of beam width. Recall that, this method with a beam width of one reduces to the greedy search heuristic method of Algorithm \ref{alg:rand_gred_dom}. For the medium size UK city graphs, we present results with respect to the beam search heuristic method for the three beam widths of $1$, $2$, and $4$. For the large international city graphs, we present results with respect to the greedy search heuristic method. Due to the high computational complexity of the beam search heuristic method, it was not feasible to apply this method to these large graphs. Since all new and baseline methods have a randomized component, they may find dominating sets of different sizes when run with different random seeds. To understand the effect of this randomness, for a given method and graph, we applied the method to the graph using ten different random seeds and reported the minimum, mean and standard deviation statistics of the resulting dominating set sizes. The minimum is a very relevant statistic because when using a randomized algorithm, one typically runs the algorithm multiple times and uses the best result achieved.

Table \ref{table:k_1_results} displays the statistics of the computed dominating set sizes for $k=1$ for each UK city. The last row in this table displays the average of each statistic computed by each method across all cities. From these results, we see that the method of Couture et al. \cite{couture2008incremental} performed less well by a significant margin. Specifically, the average minimum and mean dominating set size is significantly greater than the other methods. Furthermore, the average standard deviation of the dominating set size is also significantly greater than the other methods. This demonstrates that the method is less stable and more dependent on the choice of random seed.

The standard greedy method \cite{parekh1991analysis} performed equally well as the beam search heuristic method with a beam width of one. This result can be attributed to Theorem \ref{thm:mul_equ_methods} which established an equivalence between these methods. For most cities, the smallest minimum and mean dominating set size was achieved when using the beam search heuristic method with a larger beam width. This is reflected in the corresponding average statistics. This demonstrates the usefulness of using a beam search as opposed to the standard greedy search heuristic. Finally, the average standard deviation of the dominating set size for the standard greedy and beam search heuristic methods is quite small. This demonstrates that both methods are quite stable and less dependent on the choice of random seed.

\begin{table}
\small
\centering
\begin{tabular}{ c|c|c|c|c|c } 
 \hline
  City Name & Beam Sr & Beam Sr & Beam Sr & Standard Greedy & Couture et al. \\
  & $b=1$ & $b=2$ & $b=4$ & \cite{parekh1991analysis,gagarin2018multiple} &  \cite{couture2008incremental}  \\
 \hline
 Bath & 44, 45.0, 0.8 & 43, 44.7, 1.0 & 43, 44.6, 0.9 & 44, 45.1, 0.9 & 58, 63.9, 3.1 \\
 Belfast & 48, 50.5, 1.8 & 48, 50.3, 1.6 & 48, 50.2, 1.5 & 48, 50.3, 1.5 & 74, 76.6, 2.7 \\
 Brighton & 28, 28.6, 0.9 & 28, 28.2, 0.6 & 28, 28.2, 0.6 & 28, 28.8, 1.0 & 33, 37.3, 2.2 \\
 Bristol & 47, 47.5, 1.0 & 46, 47.4, 1.0 & 47, 47.2, 0.9 & 47, 47.7, 0.8 & 69, 74.1, 5.6 \\
 Cardiff & 49, 51.0, 0.9 & 49, 50.8, 0.7 & 48, 50.6, 1.0 & 49, 50.8, 0.9 & 71, 76.8, 4.0 \\
 Coventry & 44, 45.1, 0.5 & 44, 44.9, 0.3 & 44, 44.8, 0.4 & 44, 45.2, 0.6 & 69, 73.8, 3.1 \\
 Exeter & 50, 50.8, 0.9 & 49, 50.6, 0.8 & 50, 50.6, 0.5 & 50, 51.0, 0.7 & 71, 76.7, 3.0 \\
 Glasgow & 58, 59.7, 1.2 & 58, 59.5, 1.1 & 58, 59.2, 0.7 & 58, 59.7, 1.2 & 79, 86.4, 3.7 \\
 Leeds & 51, 52.7, 0.8 & 51, 52.6, 0.8 & 51, 52.4, 0.8 & 51, 52.7, 0.8 & 73, 76.7, 3.6 \\
 Leicester & 51, 51.8, 0.4 & 51, 51.6, 0.5 & 51, 51.5, 0.5 & 51, 52.0, 0.4 & 75, 80.9, 3.6 \\
 Liverpool & 38, 38.4, 0.5 & 38, 38.5, 0.5 & 38, 38.4, 0.5 & 38, 38.6, 0.5 & 50, 56.6, 4.3 \\
 Manchester & 45, 46.2, 0.9 & 45, 46.0, 0.6 & 45, 45.9, 0.5 & 45, 46.0, 0.8 & 71, 75.2, 3.7 \\
 Newcastle & 52, 53.3, 1.0 & 51, 52.9, 1.1 & 51, 52.6, 1.1 & 52, 53.4, 0.6 & 73, 77.5, 2.4 \\
 Nottingham & 56, 57.1, 0.8 & 56, 56.9, 0.7 & 55, 56.6, 0.8 & 56, 56.9, 0.5 & 77, 81.8, 2.6 \\
 Oxford & 28, 28.2, 0.4 & 27, 28.0, 0.6 & 27, 27.9, 0.5 & 27, 28.1, 0.7 & 38, 40.9, 1.9 \\
 Plymouth & 40, 40.6, 0.6 & 40, 40.5, 0.7 & 39, 40.3, 0.8 & 40, 40.7, 0.6 & 54, 59.1, 3.4 \\
 Sheffield & 52, 53.3, 0.9 & 52, 53.0, 0.9 & 51, 52.5, 0.7 & 52, 53.1, 0.5 & 76, 81.4, 3.2 \\
 Southampton & 29, 29.9, 0.8 & 28, 29.8, 0.9 & 28, 29.6, 0.8 & 29, 29.6, 0.5 & 41, 46.0, 2.7 \\
 Sunderland & 46, 46.5, 0.5 & 46, 46.3, 0.4 & 46, 46.3, 0.4 & 46, 46.5, 0.5 & 56, 62.7, 3.7 \\
 York & 39, 39.3, 0.4 & 39, 39.2, 0.4 & 39, 39.1, 0.3 & 39, 39.5, 0.5 & 60, 68.5, 4.7 \\
 \hline
 \rule{0pt}{4ex} Average & 44.7, 45.7, 0.8 & 44.4, 45.5, 0.7 & 44.3, 45.4, 0.7 & 44.7, 45.7, 0.7 & 63.4, 68.6, 3.3 \\
 \hline
\end{tabular}
\caption{The minimum, mean and standard deviation of the dominating set ($k=1$) sizes computed using different heuristic methods for 20 UK cities.}
\label{table:k_1_results}
\end{table}

Tables \ref{table:k_2_results} and \ref{table:k_4_results} display the statistics of the computed dominating set sizes for $k$ equal to $2$ and $4$ respectively for each UK city. For both values of $k$, the method of Couture et al. \cite{couture2008incremental} performed less well by a significant margin while the beam search heuristic method performed the best. Comparing the averages of the best found $2$-dominating set sizes, we see that on average it achieved a best $2$-dominating set size approximately $3$, $3.5$, and $4$ vertices smaller than that achieved by the standard greedy method when using the beam width of 1, 2, and 4, respectively. This approximately equals a 5\% reduction in the size of best found $2$-dominating sets. Comparing the averages of the best found $4$-dominating set sizes, we see that on average it achieved a best $4$-dominating set size approximately $13.5$, $14$, and $14.5$ vertices smaller than that achieved by the standard greedy method when using the beam width of 1, 2, and 4, respectively. This approximately equals a 9\% reduction in the size of best found $2$-dominating sets. Similarly to the case $k=1$, the average standard deviation of the dominating set size for the standard greedy and beam search heuristic methods is quite small.

\begin{table}
\small
\centering
\begin{tabular}{ c|c|c|c|c|c } 
 \hline
  City Name & Beam Sr & Beam Sr & Beam Sr & Standard Greedy & Couture et al. \\
  & $b=1$ & $b=2$ & $b=4$ & \cite{gagarin2018multiple} &  \cite{couture2008incremental}  \\
 \hline
 Bath & 86, 89.0, 1.4 & 87, 89.0, 0.6 & 87, 88.0, 0.6 & 90, 91.2, 0.7 & 118, 122.5, 3.1 \\
 Belfast & 96, 98.9, 1.6 & 96, 98.8, 1.6 & 96, 97.6, 1.0 & 100, 104.5, 1.9 & 140, 147.9, 4.0 \\
 Brighton & 50, 50.6, 0.5 & 49, 50.0, 0.6 & 49, 49.4, 0.5 & 51, 52.3, 1.0 & 68, 73.2, 3.8 \\
 Bristol & 93, 95.2, 1.1 & 93, 94.8, 0.9 & 91, 94.0, 1.4 & 96, 98.2, 1.8 & 142, 145.1, 4.0 \\
 Cardiff & 95, 97.7, 1.4 & 95, 97.1, 1.1 & 92, 95.9, 1.6 & 97, 98.8, 0.9 & 141, 146.1, 3.7 \\
 Coventry & 85, 85.8, 0.7 & 84, 85.3, 0.6 & 84, 85.1, 0.7 & 88, 89.1, 1.3 & 138, 142.3, 3.8 \\
 Exeter & 94, 96.4, 1.1 & 94, 96.1, 0.9 & 94, 95.7, 1.0 & 97, 99.3, 1.6 & 141, 148.4, 4.5 \\
 Glasgow & 111, 112.5, 1.3 & 108, 111.6, 1.7 & 108, 110.6, 1.7 & 113, 116.3, 2.2 & 149, 157.0, 4.4 \\
 Leeds & 99, 100.3, 0.6 & 99, 100.0, 0.6 & 98, 99.6, 1.0 & 101, 102.6, 2.3 & 143, 150.2, 5.7 \\
 Leicester & 94, 94.8, 0.6 & 93, 94.4, 0.9 & 93, 94.1, 0.8 & 100, 101.0, 0.8 & 146, 150.3, 2.9 \\
 Liverpool & 71, 72.4, 0.8 & 71, 72.4, 0.8 & 71, 72.0, 0.8 & 73, 74.4, 0.6 & 102, 110.0, 4.6 \\
 Manchester & 92, 93.0, 0.8 & 90, 92.2, 0.9 & 90, 91.5, 0.9 & 92, 93.9, 1.9 & 143, 147.5, 3.3 \\
 Newcastle & 95, 97.2, 1.8 & 95, 96.4, 1.4 & 94, 95.4, 1.1 & 99, 101.5, 1.2 & 133, 142.3, 4.8 \\
 Nottingham & 102, 103.5, 0.8 & 102, 103.3, 0.8 & 102, 103.3, 0.8 & 107, 108.5, 0.8 & 156, 161.0, 4.3 \\
 Oxford & 55, 55.5, 0.7 & 54, 55.2, 0.6 & 54, 54.9, 0.7 & 58, 59.4, 1.0 & 74, 76.8, 2.2 \\
 Plymouth & 74, 76.3, 1.2 & 74, 75.8, 1.0 & 73, 75.0, 1.1 & 77, 78.5, 0.7 & 105, 111.2, 4.9 \\
 Sheffield & 98, 100, 1.4 & 97, 99.5, 1.5 & 97, 98.9, 1.3 & 105, 106.7, 1.7 & 148, 155.1, 5.4 \\
 Southampton & 61, 61.7, 0.6 & 61, 61.6, 0.5 & 60, 61.1, 0.7 & 62, 64.2, 1.0 & 78, 87.2, 5.5 \\
 Sunderland & 88, 90.4, 1.4 & 88, 89.9, 1.1 & 87, 89.1, 1.1 & 91, 92.2, 0.6 & 120, 121.8, 2.6 \\
 York & 77, 77.9, 0.7 & 77, 78.0, 0.8 & 77, 77.6, 0.6 & 78, 78.8, 0.7 & 127, 131.0, 2.9 \\
 \hline 
 \rule{0pt}{4ex} Average & 85.8, 87.4, 1.0 & 85.3, 87.0, 0.9 & 84.8, 86.4, 0.9 & 88.7, 90.5, 1.2 & 125.6, 131.3, 4.0 \\
 \hline
\end{tabular}
\caption{The minimum, mean and standard deviation of the dominating set ($k=2$) sizes computed using different heuristic methods for 20 UK cities.}
\label{table:k_2_results}
\end{table}

\begin{table}
\small
\centering
\begin{tabular}{ c|c|c|c|c|c } 
 \hline
  City Name & Beam Sr & Beam Sr & Beam Sr & Standard Greedy & Couture et al. \\
  & $b=1$ & $b=2$ & $b=4$ & \cite{gagarin2018multiple} &  \cite{couture2008incremental}  \\
 \hline
 Bath & 160, 162.8, 1.6 & 159, 161.3, 1.2 & 159, 160, 1.1 & 178, 180, 1.33 & 210, 220.4, 4.3 \\
 Belfast & 178, 181.0, 1.7 & 177, 180.2, 2.0 & 177, 179.6, 2.0 & 194, 196.0, 1.4 & 257, 263.0, 4.4 \\
 Brighton & 93, 95.7, 1.4 & 93, 94.4, 0.9 & 92, 94.8, 1.9 & 101, 103.5, 2.1 & 123, 135.4, 5.5 \\
 Bristol & 176, 177.7, 1.1 & 175, 176.8, 0.9 & 175, 176.4, 0.8 & 187, 188.3, 0.9 & 253, 263.2, 5.4 \\
 Cardiff & 181, 185.0, 1.7 & 181, 183.6, 1.7 & 181, 183.2, 1.4 & 196, 199.6, 2.0 & 238, 252.5, 7.3 \\
 Coventry & 171, 175.1, 1.7 & 171, 174.1, 1.5 & 170, 172.6, 1.4 & 182, 183.4, 1.5 & 247, 255.2, 5.2 \\
 Exeter & 182, 183.1, 0.9 & 182, 182.8, 0.6 & 181, 182.3, 0.6 & 196, 199.4, 1.9 & 259, 263.9, 3.2 \\
 Glasgow & 198, 201.6, 1.9 & 198, 200.5, 1.6 & 197, 199.8, 1.6 & 221, 226.2, 2.3 & 256, 264.3, 4.3 \\
 Leeds & 186, 188.5, 1.2 & 187, 188.0, 0.8 & 186, 187.1, 0.7 & 198, 201.5, 2.4 & 264, 268.6, 3.8 \\
 Leicester & 176, 179.6, 1.4 & 176, 179.3, 1.6 & 175, 177.7, 1.8 & 199, 202.0, 2.3 & 267, 274.1, 4.5 \\
 Liverpool & 133, 134.5, 1.4 & 132, 133.7, 1.1 & 132, 133, 0.8 & 143, 145.4, 1.5 & 194, 201.9, 4.9 \\
 Manchester & 177, 179.6, 1.2 & 177, 179.1, 1.2 & 177, 178.5, 1.0 & 185, 188.5, 1.3 & 251, 266.6, 7.6 \\
 Newcastle & 170, 172.4, 1.0 & 169, 171.5, 1.2 & 170, 171.2, 0.7 & 189, 192.9, 2.33 & 242, 246.6, 4.9 \\
 Nottingham & 194, 196.5, 1.1 & 194, 195.3, 1.0 & 193, 195.2, 1.2 & 205, 208.4, 2.4 & 288, 295.1, 4.6 \\
 Oxford & 100, 101.7, 1.2 & 99, 100.8, 0.9 & 100, 100.8, 0.8 & 108, 114.8, 3.1 & 129, 130.9, 1.2 \\
 Plymouth & 137, 138.8, 1.3 & 136, 137.9, 1.1 & 135, 137.0, 1.2 & 153, 155.1, 1.4 & 195, 200.4, 4.0 \\
 Sheffield & 182, 184.3, 0.9 & 182, 183.2, 1.1 & 180, 182.2, 1.2 & 202, 204.3, 1.5 & 272, 278.3, 4.3 \\
 Southampton & 113, 114.7, 1.6 & 113, 114.2, 1.3 & 112, 113.2, 1.4 & 124, 125.2, 1.2 & 150, 156.0, 3.4 \\
 Sunderland & 164, 164.8, 0.7 & 163, 164.1, 0.7 & 162, 163.6, 1.0 & 176, 180.7, 3.1 & 218, 223.8, 4.2 \\
 York & 146, 147.1, 1.0 & 145, 146.4, 1.2 & 144, 145.8, 1.2 & 153, 157.2, 2.2 & 223, 231.0, 6.0 \\
 \hline 
 \rule{0pt}{4ex} Average & 160.8, 163.2, 1.3 & 160.4, 162.3, 1.1 & 159.9, 161.7, 1.1 & 174.5, 177.6, 1.9 & 226.8, 234.5, 4.6 \\
 \hline
\end{tabular}
\caption{The minimum, mean and standard deviation of the dominating set ($k=4$) sizes computed using different heuristic methods for 20 UK cities.}
\label{table:k_4_results}
\end{table}

Tables \ref{table:k_1_results_international}, \ref{table:k_2_results_international}, and \ref{table:k_4_results_international} display the statistics of the computed dominating set sizes for $k$ equal to $1$, $2$, and $4$ respecively for each international city. For all values of $k$, Couture et al. \cite{couture2008incremental} performed less well by a significant margin. For $k$ equal to $2$ and $4$, the proposed greedy heuristic method performed the best. In fact, comparing the averages of the best found dominating set sizes, we see that on average it achieved a best $2$- and $4$-dominating set size approximately $6$ and $23$ vertices respectively smaller than that achieved by the standard greedy method. This approximately equals a 3\% and 6\% reduction respectively in the size of best found $k$-dominating set.

\begin{table}
\centering
\begin{tabular}{ c|c|c|c } 
 \hline
  City Name & Proposed Greedy & Standard Greedy & Couture et al. \\
  &  & \cite{parekh1991analysis,gagarin2018multiple} &  \cite{couture2008incremental}  \\
 \hline
 Belgrade, Serbia & 99, 100.3, 0.9 & 99, 100.9, 0.9 & 128, 138.8, 5.0 \\
 Berlin, Germany & 146, 147.2, 1.1 & 146, 147.0, 0.7 & 174, 181.7, 6.0  \\
 Boston, USA & 71, 72.8, 1.2 & 71, 72.6, 1.4 & 85, 86.8, 2.3 \\
 Dublin, Ireland & 88, 90.2, 1.8 & 89, 89.8, 1.2 & 129, 134.1, 3.5 \\
 Minsk, Belarus & 139, 139.8, 0.9 & 138, 139.3, 1.1 & 183, 188.9, 4.6 \\
 \hline
 \rule{0pt}{4ex} Average & 108.6, 109.9, 1.1 & 108.6, 109.9, 1.0 & 139.8, 146.0, 4.2 \\
 \hline
\end{tabular}
\caption{The minimum, mean and standard deviation of the dominating set ($k=1$) sizes computed using different heuristic methods for 5 international cities.}
\label{table:k_1_results_international}
\end{table}

\begin{table}
\centering
\begin{tabular}{ c|c|c|c } 
 \hline
  City Name & Proposed Greedy & Standard Greedy & Couture et al. \\
  &  & \cite{parekh1991analysis,gagarin2018multiple} &  \cite{couture2008incremental}  \\
 \hline
 Belgrade, Serbia & 197, 197.5, 0.9 & 199, 200.1, 0.7 & 271, 281.2, 6.7 \\
 Berlin, Germany & 268, 270.4, 1.4 & 277, 278.0, 1.6 & 352, 361.5, 7.4 \\
 Boston, USA & 133, 135.0, 1.1 & 137, 138.0, 0.6 & 167, 175.7, 6.1 \\
 Dublin, Ireland & 166, 166.8, 0.4 & 175, 175.8, 0.7 & 256, 269.0, 7.6 \\
 Minsk, Belarus & 265, 266.4, 1.3 & 271, 272.7, 1.3 & 360, 365.4, 5.4 \\
 \hline
 \rule{0pt}{4ex} Average & 205.8, 207.2, 1.0 & 211.8, 212.9, 0.9 & 281.2, 290.5, 6.6 \\
 \hline
\end{tabular}
\caption{The minimum, mean and standard deviation of the dominating set ($k=2$) sizes computed using different heuristic methods for 5 international cities.}
\label{table:k_2_results_international}
\end{table}

\begin{table}
\centering
\begin{tabular}{ c|c|c|c } 
 \hline
  City Name & Proposed Greedy & Standard Greedy & Couture et al. \\
  &  & \cite{parekh1991analysis,gagarin2018multiple} &  \cite{couture2008incremental}  \\
 \hline
 Belgrade, Serbia & 379, 381.4, 2.0 & 397, 400.1, 1.3 & 537, 550.5, 8.5 \\
 Berlin, Germany & 501, 503.4, 1.9 & 531, 537.4, 4.4 & 685, 697.7, 13.4 \\
 Boston, USA & 255, 255.4, 0.5 & 265, 266.8, 2.0 & 336, 349.4, 7.7 \\
 Dublin, Ireland & 317, 318.5, 1.5 & 342, 344.1, 1.6 & 514, 530.5, 10.4 \\
 Minsk, Belarus & 506, 509.4, 2.9 & 539, 542.3, 2.4 & 684, 701.5, 9.6 \\
 \hline
 \rule{0pt}{4ex} Average & 391.6, 393.62, 1.76 & 414.8, 418.14, 2.34 & 551.2, 565.92, 9.92 \\
 \hline
\end{tabular}
\caption{The minimum, mean and standard deviation of the dominating set ($k=4$) sizes computed using different heuristic methods for 5 international cities.}
\label{table:k_4_results_international}
\end{table}

The proposed greedy and beam search heuristic methods perform better than the standard greedy approach for larger values of $k$. This can be attributed to the fact that, as the value of $k$ increases, the number of levels to which a vertex can be dominated increases. For example, in the case $k=1$, a vertex can only be dominated or not dominated. On the other hand, when $k=4$, a vertex can be dominated to five different levels, corresponding to the values of coverage parameter $C(D,v)$ in problem formulation (\ref{eq:dom_opt_re}). The proposed greedy and beam search heuristic methods exploit this information to make better decisions, while the standard greedy approach does not. In summary, for the $k$-domination problem with $k>1$, the proposed new greedy and beam search heuristic methods outperform the baseline methods, and the performance gain is greater for larger values of $k$.

Table \ref{table:k_2_times} reports running times measured in seconds required by the proposed and baseline methods to compute $2$-dominating sets for five UK cities and five international cities. All algorithms were implemented in the Python programming language and executed on a desktop computer with an Intel Core i7-8700 CPU. The proposed and standard greedy algorithms run very quickly on the medium sized UK city networks. Both algorithms run reasonably fast on the large sized international city networks considering the size of the networks in question. The general beam search heuristic method runs much slower than either of the greedy algorithms. This can be attributed to higher computational complexity plus the challenge in transforming this method into an efficient implementation. 
In particular, although the beam search heuristic with the beam width equal to $1$ is equivalent to the proposed greedy algorithm, its overhead makes it much slower and inefficient in comparison to the pure greedy version. 
The method of \cite{couture2008incremental} runs very quickly on the medium sized UK city networks as well as the large sized international city networks.

\begin{table}
\small
\centering
\begin{tabular}{ c|c|c|c|c|c|c } 
 \hline
  City Name & Proposed & Beam Sr & Beam Sr & Beam Sr & Standard & Couture \\
   & Greedy & $b=1$ & $b=2$ & $b=4$ & Greedy \cite{gagarin2018multiple} & et al. \cite{couture2008incremental} \\
 \hline
 Bath & 1, 0 & 112, 1 & 464, 15 & 1736, 41 & 1, 0 & 1, 0 \\
 Belfast & 3, 0 & 582, 14 & 2182, 99 & 8834, 177 & 2, 0 & 1, 0 \\
 Brighton & 1, 0 & 78, 2 & 309, 4 & 1257, 31 & 1, 0 & 1, 0 \\
 Bristol & 2, 0 & 447, 8 & 1752, 30 & 7156, 161 & 2, 0 & 1, 0 \\
 Cardiff & 1, 0 & 210, 2 & 835, 21 & 3327, 100 & 1, 0 & 1, 0 \\
 Belgrade & 907, 11 & -  & - & - & 800, 17  & 14, 1 \\
 Berlin & 1435, 40 & - & - & - & 1199, 31  & 21, 1 \\
 Boston & 1761, 21 & - & - & - & 1437, 21 & 23, 2  \\
 Dublin & 1828, 9 & - & - & - & 1593, 18  & 23, 1  \\
 Minsk & 188, 1 & - & - & - & 165, 3  & 7, 0  \\
 \hline
\end{tabular}
\caption{The mean and standard deviation of running times (in seconds) for computing dominating sets ($k=2$) using different heuristic methods.}
\label{table:k_2_times}
\end{table}

\section{Conclusion}
\label{sec:conclusions}
In this work, we proposed novel greedy and beam search heuristic methods for the $k$-domination problem. These methods are inspired by a novel formulation of the problem (\ref{eq:dom_opt_re}). The methods were evaluated with respect to two baseline methods on a set of street network reachability graphs. Our evaluation found that, for the classic domination problem ($k=1$), the proposed methods perform equally well with one of the existing methods. This result is attributed to an equivalence between methods in this particular case. On the other hand, for the $k$-domination problem with $k>1$, the proposed methods outperform the baseline methods, and the performance gain is greater for larger values of $k$.

A useful characteristic of the proposed methods is their simplicity. The proposed beam search heuristic method of Algorithm \ref{alg:rand_beam_dom} with a beam width of one reduces to the greedy search heuristic method of Algorithm \ref{alg:rand_gred_dom}. The latter algorithm is efficient and simple to implement. This contrasts with metaheuristic or machine learning based methods for combinatorial optimization problems which can be very challenging to implement \cite{khalil2017learning}. 

Possible directions for future research in this area include the following. The evaluation presented in this work was purely empirical. In future work, it would be interesting to show better analysis of the performance of the proposed methods. Such analysis could take the form of proving some bounds on the size of the $k$-dominating sets found by the algorithms. The related works section of this article highlighted that there currently exist no metaheuristic methods for the $k$-domination problem where $k>1$. Given good performance of such methods with respect to the classic domination problem ($k=1$), this presents an interesting direction for research as well.

\bibliographystyle{unsrt}
\bibliography{multiple_domination_heuristic}

\end{document}